\documentclass{article}
\usepackage{mlspconf}
\usepackage{amsfonts}
\usepackage{amsmath}
\usepackage{graphicx}
\usepackage{algpseudocode}
\usepackage{color}
\usepackage{enumitem}
\usepackage[makeroom]{cancel}
\usepackage{MnSymbol}

\usepackage{dblfloatfix}

\usepackage{amsthm}

\usepackage{url}

\newtheoremstyle{break}
  {}
  {}
  {\itshape}
  {}
  {\bfseries}
  {.}
  {\newline}
  {}
\theoremstyle{break}
\newtheorem{proposition}{Proposition}

\theoremstyle{definition}
\newtheorem{definition}{Definition}

\setlist[itemize]{label=$\triangleright$}


%
%




\DeclareMathOperator*{\argmin}{arg\min}
\DeclareMathOperator*{\argmax}{arg\max}
\newcommand{\bs}{\boldsymbol}
\newcommand{\E}{\mathbb{E}}

\title{A variance modeling framework based on variational autoencoders for speech enhancement}
%
\name{Simon Leglaive\textsuperscript{\normalfont 1}\thanks{This work is supported by the ERC Advanced Grant VHIA \#340113.} \qquad Laurent Girin\textsuperscript{\normalfont 1,2} \qquad Radu Horaud\textsuperscript{\normalfont 1}}

\address{\textsuperscript{1}Inria Grenoble Rh\^one-Alpes, France \qquad \textsuperscript{2}Univ. Grenoble Alpes, Grenoble INP, GIPSA-lab, France}
%
%
\begin{document}
\ninept

\maketitle
\begin{abstract}
In this paper we address the problem of enhancing speech signals in noisy mixtures using a source separation approach. We explore the use of neural networks as an alternative to a popular speech variance model based on supervised non-negative matrix factorization (NMF). More precisely, we use a variational autoencoder as a speaker-independent supervised generative speech model, highlighting the conceptual similarities that this approach shares with its NMF-based counterpart. In order to be free of generalization issues regarding the noisy recording environments, we follow the approach of having a supervised model only for the target speech signal, the noise model being based on unsupervised NMF. We develop a Monte Carlo expectation-maximization algorithm for inferring the latent variables in the variational autoencoder and estimating the unsupervised model parameters. Experiments show that the proposed method outperforms a semi-supervised NMF baseline and a state-of-the-art fully supervised deep learning approach.

\end{abstract}
\begin{keywords}
Audio source separation, speech enhancement, variational autoencoders, non-negative matrix factorization, Monte Carlo expectation-maximization
\end{keywords}
\section{Introduction}
\label{sec:intro}

Speech enhancement is a classical problem of speech processing, which aims to recover a clean speech signal from the recording of a noisy signal, where the noise is generally considered as additive \cite{loizou2007speech}. In this work we address single-channel speech enhancement, which can be seen as a an under-determined source separation problem, where the sources to be separated are of different nature. 

Statistical approaches combining a local Gaussian modeling of the time-frequency signal coefficients with a variance model have attracted a lot of attention in the past few years \cite{vincent2010probabilistic}. Within this framework, non-negative matrix factorization (NMF) techniques are especially popular to structure the time-frequency-dependent signal variance \cite{ISNMF}. Recently, deep neural networks were also successfully investigated for speech enhancement, to generate either time-frequency masks or clean power spectrograms from noisy spectrograms \cite{wang2017supervised}, or to model the signal variance \cite{nugraha_taslp16} (here in a multichannel framework). Similarities between NMF and standard autoencoders have been reported in \cite{Smaragdis:alternative}. The recent approach \cite{bando2017statistical} also falls into the variance modeling framework. Very interestingly, the authors developed a semi-supervised single-channel speech enhancement method mixing concepts of Bayesian inference and deep learning, through the use of a variational autoencoder \cite{kingma2013auto}. 

In the present work, we follow a similar approach, using a speaker-independent supervised Gaussian speech model based on a variational autoencoder. For a given time-frequency point, it mainly consists in modeling the speech variance as a non-linear function of a Gaussian latent random vector, by means of a neural network. Compared with \cite{bando2017statistical}, we further introduce a time-frame-dependent gain in order to provide some robustness with respect to the loudness of the training examples. The noise model is Gaussian and based on unsupervised NMF, so as to be free of generalization issues regarding the noisy recording environments. We propose a new Monte Carlo expectation-maximization algorithm \cite{wei1990monte} for performing semi-supervised speech enhancement. Experimental results show the superiority of our approach compared with a baseline method using NMF, and also compared with a state-of-the-art fully supervised deep learning approach \cite{xu2015regression}.

We first provide in Section~\ref{sec:varianceModelingFramework} a brief presentation of the NMF-based variance modeling framework commonly used in audio source separation. This will help us emphasizing the conceptual similarities shared with the speech model used in this work. In Section~\ref{sec:Model} we present the model. Our inference algorithm is detailed in Section~\ref{sec:inference}. The optimization problem related to the variational autoencoder training is presented in Section~\ref{sec:training}. Experiments are described in Section~\ref{sec:experiments} and we finally conclude in Section~\ref{sec:conclusion}.

\section{Variance Modeling Framework}
\label{sec:varianceModelingFramework}

We work in the short-term Fourier transform (STFT) domain. For all $(f,n) \in \mathbb{B}={\{0,...,F-1\}}\times{\{0,...,N-1\}}$, where $f$ denotes the frequency index and $n$ the time-frame index, the single-channel speech enhancement problem consists in recovering the clean speech STFT coefficients $s_{fn} \in \mathbb{C}$ from the noisy observed ones $x_{fn} = s_{fn} + b_{fn}$, where $b_{fn} \in \mathbb{C}$ is a noise signal. The source STFT coefficients are modeled as complex circularly symmetric Gaussian random variables \cite{vincent2010probabilistic}:
\begin{equation}
s_{fn} \sim \mathcal{N}_c(0, v_{s,fn}); \qquad
b_{fn} \sim \mathcal{N}_c(0, v_{b,fn}),
\label{local_gaussian_source_model}
\end{equation}
where $\mathcal{N}_c(\mu, \sigma^2)$ denotes the complex proper Gaussian distribution, which is circularly symmetric (rotational invariant in the complex plane) if $\mu = 0$. Under a local stationarity assumption \cite{liutkus2011gaussian}, the variances in \eqref{local_gaussian_source_model} characterize the short-term power spectral density of the audio source signals. 
Note that this Gaussian model has been recently extended to other complex circularly symmetric and elliptically contoured distributions \cite{ollila2011complex}, such as the symmetric alpha-stable \cite{liutkus2015generalized} or the Student's \textit{t} \cite{yoshii2016student} distributions. 

A widely used linear variance model for the source signals relies on NMF. For $j\in\{s,b\}$, let $\mathbf{W}_{j} \in \mathbb{R}_+^{F \times K_j}$ be a matrix containing $K_j$ spectral patterns, and let $\mathbf{H}_{j} \in \mathbb{R}_+^{K_j \times N}$ be a matrix containing the activations of these spectral patterns over the time frames. At each time-frequency point $(f,n) \in \mathbb{B}$, the variance $v_{j,fn}$ is modeled as follows: 
\begin{equation}
v_{j,fn} =  \left(\mathbf{W}_{j} \mathbf{H}_{j}\right)_{f,n} = \mathbf{w}_{j,f}^\top \mathbf{h}_{j,n},
\label{NMF_variance_model}
\end{equation}
where $\mathbf{w}^\top_{j,f} \in \mathbb{R}_+^{K_j}$ is $f$-th row of $\mathbf{W}_{j}$, $\mathbf{h}_{j, n} \in \mathbb{R}_+^{K_j}$ is $n$-th column of $\mathbf{H}_{j}$, and $\cdot^\top$ denotes the transposition operator. Without any additional constraint on the model parameters, the NMF rank $K_j$ is generally chosen such that $K_j(F + N) \ll FN$. Compared with the unconstrained Gaussian model in \eqref{local_gaussian_source_model}, the number of parameters to be estimated is therefore considerably reduced, which is of great interest for solving an under-determined problem.

In a semi-supervised speech enhancement setting \cite{smaragdis2007supervised}, $\mathbf{W}_{s}$ is learned on a clean speech database. The variance model in \eqref{NMF_variance_model} is therefore seen as a linear function of the activations. Then, in the speech enhancement algorithm, only the speech activation matrix $\mathbf{H}_{s}$ and the noise NMF parameters $\mathbf{W}_{b}$ and $\mathbf{H}_{b}$ are estimated from the observation of the noisy mixture signal. It has been shown in \cite{ISNMF} that maximum likelihood estimation of the NMF parameters under this Gaussian model amounts to solving an optimization problem involving the Itakura-Saito divergence. An optimal estimate of the speech signal in the minimum mean square error sense is then obtained by Wiener filtering.

\section{Model}
\label{sec:Model}

\subsection{Supervised Speech Model}

In a supervised setting, the variance model in \eqref{NMF_variance_model} is limited by its linear nature. Moreover, the number of trainable parameter is restricted by the factorization rank, which is usually chosen to be small. In this work we consider a supervised and non-linear modeling of the speech variances. Independently for all $(f,n) \in \mathbb{B}$, we have the following generative model, involving a latent random vector $\mathbf{z}_n \in \mathbb{R}^L$:
\begin{align}
\mathbf{z}_n &\sim \mathcal{N}(\mathbf{0}, \mathbf{I}); \label{prior_VAE} \\
s_{fn} \mid \mathbf{z}_n ; \bs{\theta}_s &\sim \mathcal{N}_c(0, \sigma_f^2(\mathbf{z}_n)),
\label{decoder_VAE}
\end{align}
where $\mathcal{N}(\bs{\mu}, \bs{\Sigma})$ denotes the multivariate Gaussian distribution for a real-valued random vector, $\mathbf{I}$ is the identity matrix of appropriate size, and $\sigma_f^2: \mathbb{R}^L \mapsto \mathbb{R}_+$, $f \in \{0,...,F-1\}$, represents a non-linear function parametrized by $\bs{\theta}_s$. The output of this function is provided by one neuron in the $F$-dimensional output layer of a neural network, whose weights and biases are denoted by $\bs{\theta}_s$. In the context of variational autoencoders, the model \eqref{prior_VAE}-\eqref{decoder_VAE} is called the \emph{generative network} or \emph{probabilistic decoder} \cite{kingma2013auto}. 

As in the case of an NMF-based supervised variance model, we assume that the parameters $\bs{\theta}_s$ are learned on a clean speech database. This is done using a \emph{recognition network} or \emph{probabilistic encoder} in addition to the probabilistic decoder, as will be explained in Section~\ref{sec:training}. Only the latent random vectors $\{\mathbf{z}_n\}_{n=0}^{N-1}$ will have to be inferred from the observed data. Moreover, the latent dimension $L$ is set much smaller than $F$.

\subsection{Unsupervised Noise Model}

As introduced in Section~\ref{sec:varianceModelingFramework}, we use an unsupervised NMF-based Gaussian model for the noise. Independently for all $(f,n) \in \mathbb{B}$:
\begin{equation}
b_{fn} ; \mathbf{w}_{b,f}, \mathbf{h}_{b,n} \sim \mathcal{N}_c\left(0, \left(\mathbf{W}_b\mathbf{H}_b\right)_{f,n}\right).
\end{equation}

\subsection{Mixture Model} 

The observed mixture signal is modeled as follows for all $(f,n) \in \mathbb{B}$:
\begin{equation}
x_{fn} = \sqrt{g_n} s_{fn} + b_{fn},
\label{mixture}
\end{equation}
where $g_n \in \mathbb{R}_+$ represents a frame-dependent but frequency-independent gain. We introduce this term in order to provide some robustness with respect to the loudness of the training examples used to learn the variational autoencoder parameters $\bs{\theta}_s$ (this will be illustrated in Section~\ref{sec:experiments}). The speech and noise signals are further supposed to be mutually independent given the latent random vectors $\{\mathbf{z}_n\}_{n=0}^{N-1}$, such that for all $(f,n) \in \mathbb{B}$:
\begin{equation}
x_{fn} \mid \mathbf{z}_n ; \bs{\theta}_s, \bs{\theta}_{u,fn} \sim \mathcal{N}_c\left(0, g_n \sigma_f^2(\mathbf{z}_n) + \left(\mathbf{W}_b\mathbf{H}_b\right)_{f,n}\right),
\label{likelihood}
\end{equation}
where $\bs{\theta}_{u,fn} = \{\mathbf{w}_{b,f}, \mathbf{h}_{b,n}, g_n\}$ is the set of unsupervised model parameters at time-frequency point $(f,n)$.

\section{Inference}
\label{sec:inference}

Let us introduce the following notations:
\begin{itemize}[leftmargin=*]
\item $\mathbf{x}_n = \{x_{fn}\}_{f=0}^{F-1}$: The set of mixture STFT coefficients for a given time frame $n \in \{0,...,N-1\}$;
\item $\mathbf{x} = \{x_{fn}\}_{(f,n) \in \mathbb{B}}$: The set of all mixture STFT coefficients;
\item $\mathbf{z} = \{\mathbf{z}_n\}_{n=0}^{N-1}$: The set of all latent variables;
\item $\bs{\theta}_u \hspace{-0.05cm}=\hspace{-0.05cm} \{\bs{\theta}_{u,fn}\}_{(f,n) \in \mathbb{B}}$: The set of unsupervised model parameters.
\end{itemize}
From the set of observations $\mathbf{x}$, our primary goal is to estimate the unsupervised model parameters $\bs{\theta}_u$, which will serve in the final inference of the speech STFT coefficients. Unfortunately, straightforward maximum likelihood estimation is here intractable. A common alternative then consists in exploiting the latent variable structure of the model to derive an expectation-maximization (EM) algorithm \cite{dempster1977maximum}. From an initialization $\bs{\theta}_u^\star$ of the model parameters, it consists in iterating the two following steps until convergence:
\begin{itemize}[leftmargin=*]
\item E-Step: Compute $Q(\bs{\theta}_u; \bs{\theta}_u^\star) \hspace{-0.05cm}=\hspace{-0.05cm} \mathbb{E}_{p(\mathbf{z} | \mathbf{x} ; \bs{\theta}_s, \bs{\theta}_u^\star)} \hspace{-0.09cm} \left[ \ln p(\mathbf{x}, \mathbf{z} ; \bs{\theta}_s, \bs{\theta}_u) \right]$;
\item M-Step: Update $\bs{\theta}_u^\star \leftarrow \argmax_{\bs{\theta}_u} \, Q(\bs{\theta}_u; \bs{\theta}_u^\star)$.
\end{itemize}

\subsection{E-Step}
\label{subsec:EStep}

Due to the non-linear relation between the observations and the latent variables \eqref{likelihood}, we cannot compute the posterior distribution ${p(\mathbf{z} \mid \mathbf{x} ; \bs{\theta}_s, \bs{\theta}_u^\star)}$ in an analytical form. Therefore we cannot compute the expectation involved in the definition of $Q(\bs{\theta}_u; \bs{\theta}_u^\star)$ at the E-Step. We consequently approximate it by the following empirical average (Monte Carlo approximation):
\begin{align}
Q(\bs{\theta}_u; \bs{\theta}_u^\star) &\approx \tilde{Q}(\bs{\theta}_u; \bs{\theta}_u^\star) \nonumber \\
& \overset{c}{=} - \frac{1}{R} \sum_{r=1}^{R} \sum_{(f,n) \in \mathbb{B}} \bigg[ \ln\left( g_n \sigma_f^2\left(\mathbf{z}_n^{(r)}\right) + \left(\mathbf{W}_b\mathbf{H}_b\right)_{f,n} \right) \nonumber \\
& \hspace{1cm} + \frac{|x_{fn}|^2}{g_n \sigma_f^2\left(\mathbf{z}_n^{(r)}\right) + \left(\mathbf{W}_b\mathbf{H}_b\right)_{f,n}} \bigg],
\label{QFunction_approx}
\end{align}
where $\overset{c}{=}$ denotes equality up to additive constants with respect to $\bs{\theta}_u$ and $\bs{\theta}_u^\star$, and $\{\mathbf{z}_n^{(r)}\}_{r=1,...,R}$ is a sequence of samples drawn from the posterior $p(\mathbf{z}_n \mid \mathbf{x}_n ; \bs{\theta}_s, \bs{\theta}_u^\star)$ using a Markov Chain Monte Carlo (MCMC) method. Here we use the Metropolis-Hastings algorithm \cite{Robert:2005:MCS:1051451}. This approach forms the basis of the Monte Carlo EM (MCEM) algorithm \cite{wei1990monte}. Note that unlike the standard EM algorithm, it does not ensure that the likelihood increases at each iteration. Nevertheless, some convergence results in terms of stationary point of the likelihood can be obtained under suitable conditions \cite{chan1995monte}.

At the $m$-th iteration of the Metropolis-Hastings algorithm and independently for all $n \in \{0,...,N-1\}$, we first draw a sample $\mathbf{z}_n$ from a proposal random walk distribution:
\begin{equation}
\mathbf{z}_n \mid \mathbf{z}_n^{(m-1)} ; \epsilon^2 \sim \mathcal{N}(\mathbf{z}_n^{(m-1)}, \epsilon^2 \mathbf{I}).
\end{equation}
Using the fact that this is a symmetric proposal distribution \cite{Robert:2005:MCS:1051451}, we then compute the acceptance probability $\alpha$ defined by: 
\begin{align}
\alpha =&\min\left(1, \frac{p\left( \mathbf{x}_n \mid \mathbf{z}_n ; \bs{\theta}_s, \bs{\theta}_u^\star \right) p\left(\mathbf{z}_n\right)}{p\left( \mathbf{x}_n \mid \mathbf{z}_n^{(m-1)} ; \bs{\theta}_s, \bs{\theta}_u^\star) \right) p\left(\mathbf{z}_n^{(m-1)}\right)}\right),
\label{log_acceptance_ratio}
\end{align}
where $p\left( \mathbf{x}_n \mid \mathbf{z}_n ; \bs{\theta}_s, \bs{\theta}_u^\star \right) = \prod_{f=0}^{F-1} p(x_{fn} \mid \mathbf{z}_n ; \bs{\theta}_s, \bs{\theta}_{u,fn}^\star)$ with $p(x_{fn} \mid \mathbf{z}_n ; \bs{\theta}_s, \bs{\theta}_{u,fn}^\star)$ defined in \eqref{likelihood} and $p\left(\mathbf{z}_n\right)$ defined in \eqref{prior_VAE}. Then we draw $u$ from the uniform distribution $\mathcal{U}([0,1])$. If $u < \alpha$, we accept the sample and set $\mathbf{z}_n^{(m)} = \mathbf{z}_n$, otherwise we reject the sample and set $\mathbf{z}_n^{(m)} = \mathbf{z}_n^{(m-1)}$. We only keep the last $R$ samples for computing $\tilde{Q}(\bs{\theta}_u; \bs{\theta}_u^\star)$ in \eqref{QFunction_approx}, i.e. we discard the samples drawn during a so called burn-in period. 

\subsection{M-Step}

At the M-step of the MCEM algorithm, we want to maximize $\tilde{Q}(\bs{\theta}_u; \bs{\theta}_u^\star)$ in \eqref{QFunction_approx} with respect to the unsupervised model parameters $\bs{\theta}_u$. As usual in the NMF literature \cite{fevotte2011algorithms}, we adopt a block-coordinate approach by successively and individually updating $\mathbf{H}_b$, $\mathbf{W}_b$ and $\mathbf{g} = [g_0, ..., g_{N-1}]^\top$, using the auxiliary function technique. We will provide derivation details only for $\mathbf{H}_b$, as the procedure is very similar for the other parameters.

Let $\mathcal{C}(\mathbf{H}_b)$ be the opposite of $\tilde{Q}(\bs{\theta}_u; \bs{\theta}_u^\star)$ in \eqref{QFunction_approx}, seen as a function of $\mathbf{H}_b$ only (other parameters are fixed). Our goal is to minimize $\mathcal{C}(\mathbf{H}_b)$ under a non-negativity constraint.

\begin{definition}[Auxiliary function]
The $ \mathbb{R}_+^{K_b \times N} \times \mathbb{R}_+^{K_b \times N} \mapsto \mathbb{R}_+$ mapping $\mathcal{G}(\mathbf{H}_b, \tilde{\mathbf{H}}_b)$ is an auxiliary function to $\mathcal{C}(\mathbf{H}_b)$ if and only if
\begin{align}
&\forall (\mathbf{H}_b, \tilde{\mathbf{H}}_b) \in \mathbb{R}_+^{K_b \times N} \times \mathbb{R}_+^{K_b \times N}, &\mathcal{C}(\mathbf{H}_b) \le \mathcal{G}(\mathbf{H}_b, \tilde{\mathbf{H}}_b);\\
&\forall \mathbf{H}_b \in \mathbb{R}_+^{K_b \times N},  &\mathcal{C}(\mathbf{H}_b) = \mathcal{G}(\mathbf{H}_b, \mathbf{H}_b).
\end{align}
\end{definition}
\noindent In other words, $\mathcal{G}(\mathbf{H}_b, \tilde{\mathbf{H}}_b)$ is an upper bound of $\mathcal{C}(\mathbf{H}_b)$ which is tight for $\tilde{\mathbf{H}}_b = \mathbf{H}_b$. The original minimization problem can be replaced with an alternate minimization of this upper bound. From an initial point $\mathbf{H}_b^\star$ we iterate: 
\begin{equation}
\mathbf{H}_b^\star \leftarrow  \argmin\limits_{\mathbf{H}_b  \in \mathbb{R}_+^{K_b \times N}}\, \mathcal{G}(\mathbf{H}_b, \mathbf{H}_b^\star).
\label{MM_update}
\end{equation}
This procedure corresponds to the majorize-minimize (MM) algorithm \cite{hunter2004tutorial}, which by construction leads to a monotonic decrease of $\mathcal{C}(\mathbf{H}_b)$. Moreover, its convergence properties are the same as the ones of the EM algorithm \cite{wu1983convergence}.

\begin{proposition}[Auxiliary function to $\mathcal{C}(\mathbf{H}_b)$] 
The function $\mathcal{G}(\mathbf{H}_b, \tilde{\mathbf{H}}_b)$ defined below is an auxiliary function to $\mathcal{C}(\mathbf{H}_b)$.
\begin{align}
\mathcal{G}(\mathbf{H}_b, \tilde{\mathbf{H}}_b)& = \frac{1}{R} \sum_{r=1}^{R} \sum_{(f,n) \in \mathbb{B}} \Bigg[ \ln\left( g_n \sigma_f^2\left(\mathbf{z}_n^{(r)}\right) + \left(\mathbf{W}_b\tilde{\mathbf{H}}_b\right)_{f,n} \right)  \nonumber \\
& +  \frac{\left(\mathbf{W}_b\mathbf{H}_b\right)_{f,n} - \left(\mathbf{W}_b\tilde{\mathbf{H}}_b\right)_{f,n}}{g_n \sigma_f^2\left(\mathbf{z}_n^{(r)}\right) + \left(\mathbf{W}_b\tilde{\mathbf{H}}_b\right)_{f,n} } \nonumber \\
&+ |x_{fn}|^2 \left( \frac{g_n \sigma_f^2\left(\mathbf{z}_n^{(r)}\right)}{\left(g_n \sigma_f^2\left(\mathbf{z}_n^{(r)}\right) + \left(\mathbf{W}_b\tilde{\mathbf{H}}_b\right)_{f,n}\right)^2} \right. \nonumber \\
&\left. + \sum_{k=1}^{K_b} \frac{w_{b,fk} \tilde{h}_{b,kn}^2}{h_{b,kn} \left(g_n \sigma_f^2\left(\mathbf{z}_n^{(r)}\right) + \left(\mathbf{W}_b\tilde{\mathbf{H}}_b\right)_{f,n}\right)^2} \right) \Bigg],
\end{align}
where for all $(f,n) \in \mathbb{B}$ and $k \in \{1,...,K_b\}$, $w_{b,fk} = (\mathbf{W}_b)_{f,k}$ $h_{b,kn} = (\mathbf{H}_b)_{k,n}$ and $\tilde{h}_{b,kn} = (\tilde{\mathbf{H}}_b)_{k,n}$.
\end{proposition}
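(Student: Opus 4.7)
The plan is to majorize the two $\mathbf{H}_b$-dependent terms that appear inside the sum defining $\mathcal{C}(\mathbf{H}_b) = -\tilde{Q}(\bs{\theta}_u;\bs{\theta}_u^\star)$ separately, exploiting that, as a function of $v = (\mathbf{W}_b\mathbf{H}_b)_{f,n}$ with $a = g_n \sigma_f^2(\mathbf{z}_n^{(r)}) \ge 0$ fixed, the map $v \mapsto \ln(a+v)$ is concave while $v \mapsto 1/(a+v)$ is convex. I would then verify the two defining properties of an auxiliary function: the upper-bound property and tightness at $\mathbf{H}_b = \tilde{\mathbf{H}}_b$.

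For the concave log term, I would apply the standard tangent inequality: for any $v,\tilde v \ge 0$,
\begin{equation*}
\ln(a + v) \le \ln(a+\tilde v) + \frac{v - \tilde v}{a + \tilde v},
\end{equation*}
with equality at $v = \tilde v$. Taking $v = (\mathbf{W}_b \mathbf{H}_b)_{f,n}$ and $\tilde v = (\mathbf{W}_b \tilde{\mathbf{H}}_b)_{f,n}$ yields the first two lines of $\mathcal{G}$.

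For the convex reciprocal term, I would write the denominator as a sum of $K_b + 1$ non-negative terms, $a + \sum_k w_{b,fk} h_{b,kn}$, and apply Jensen's inequality to $\phi(y) = 1/y$. For any convex-combination weights $\lambda_0,\ldots,\lambda_{K_b} \ge 0$ with $\sum_j \lambda_j = 1$, writing each summand as $\lambda_j (\alpha_j / \lambda_j)$ gives
\begin{equation*}
\frac{1}{a + \sum_k w_{b,fk} h_{b,kn}} \le \frac{\lambda_0^2}{a} + \sum_{k=1}^{K_b} \frac{\lambda_k^2}{w_{b,fk} h_{b,kn}}.
\end{equation*}
Choosing $\lambda_0 = a / (a + \tilde v)$ and $\lambda_k = w_{b,fk} \tilde h_{b,kn} / (a + \tilde v)$ (which sum to $1$) and multiplying by $|x_{fn}|^2$ produces exactly the last two lines of $\mathcal{G}$.

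Summing the two inequalities over $(f,n) \in \mathbb{B}$ and averaging over $r$ establishes $\mathcal{C}(\mathbf{H}_b) \le \mathcal{G}(\mathbf{H}_b, \tilde{\mathbf{H}}_b)$. Tightness at $\mathbf{H}_b = \tilde{\mathbf{H}}_b$ is immediate for the log majorization, and for the Jensen majorization it follows from the identity $\sum_k w_{b,fk}\tilde h_{b,kn}^2 / (\tilde h_{b,kn}(a+\tilde v)^2) = \tilde v / (a+\tilde v)^2$, so that the two pieces combine back to $1/(a+\tilde v)$. The only delicate step is selecting the Jensen weights $\lambda_j$ so that the bound is tight precisely at $\mathbf{H}_b = \tilde{\mathbf{H}}_b$; everything else is an application of classical MM machinery for Itakura--Saito-type NMF costs.
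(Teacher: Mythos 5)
Your proof is correct and follows the standard route that the form of $\mathcal{G}$ is built around: majorize the concave $\ln$ term by its tangent at $\tilde{\mathbf{H}}_b$ and the convex reciprocal term by Jensen's inequality with weights $\lambda_0 = a/(a+\tilde v)$, $\lambda_k = w_{b,fk}\tilde h_{b,kn}/(a+\tilde v)$, exactly as in the classical MM treatment of Itakura--Saito NMF cited by the paper; both the upper bound and the tightness check are verified correctly. This is essentially the same argument as the paper's (deferred) proof, so nothing further is needed.
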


\begin{proof}\renewcommand{\qedsymbol}{}
The proof is provided in \cite{supMat} due to space limitation.
\end{proof}

The auxiliary function $\mathcal{G}(\mathbf{H}_b, \tilde{\mathbf{H}}_b)$ is separable in convex functions of the individual coefficients $h_{b,kn}$, $k \in \{1,...,K_b\}$, $n \in \{0,...,N-1\}$, which is convenient for updating those parameters. By canceling the partial derivative of $\mathcal{G}(\mathbf{H}_b, \tilde{\mathbf{H}}_b)$ with respect to $h_{b,kn}$, we obtain an update which depends on $\tilde{h}_{b,kn}$. According to the MM algorithm, we can use the fact that $\tilde{h}_{b,kn}$ is equal to the previous value of $h_{b,kn}$ for obtaining the following multiplicative update in matrix form:
\begin{equation}
\mathbf{H}_b \leftarrow \mathbf{H}_b \odot \left[ \frac{\mathbf{W}_b^\top \left( \mid \mathbf{X} \mid^{\odot 2} \odot \sum\limits_{r=1}^{R} \left(\mathbf{V}_x^{(r)} \right)^{\odot-2} \right)}{\mathbf{W}_b^\top \sum\limits_{r=1}^{R} \left(\mathbf{V}_x^{(r)}\right)^{\odot-1} } \right]^{\odot 1/2},
\label{updateH}
\end{equation}
where $\odot$ denotes element-wise multiplication and exponentiation, matrix division is also element-wise, $\mathbf{V}_x^{(r)} \in \mathbb{R}_+^{F \times N}$ is the matrix of entries $\left(\mathbf{V}_x^{(r)}\right)_{f,n} = g_n \sigma_f^2\left(\mathbf{z}_n^{(r)}\right) + \left(\mathbf{W}_b\mathbf{H}_b\right)_{f,n}$, and \linebreak $\mathbf{X} \in \mathbb{C}^{F \times N}$ is the matrix of entries $ \left(\mathbf{X}\right)_{f,n} = x_{fn}$. We can straightforwardly apply the same procedure for maximizing $\tilde{Q}(\bs{\theta}_u; \bs{\theta}_u^\star)$ with respect to $\mathbf{W}_b$. The resulting multiplicative update rule is given by:
\begin{equation}
\mathbf{W}_b \leftarrow \mathbf{W}_b \odot \left[ \frac{\left( \mid \mathbf{X} \mid^{\odot 2} \odot \sum\limits_{r=1}^{R} \left(\mathbf{V}_x^{(r)}\right)^{\odot-2} \right) \mathbf{H}_b^\top}{\sum\limits_{r=1}^{R} \left(\mathbf{V}_x^{(r)}\right)^{\odot-1} \mathbf{H}_b^\top } \right]^{\odot 1/2}.
\label{updateW}
\end{equation}

Again, in a very similar fashion, the vector of speech gains is updated as follows:
\begin{equation}
\mathbf{g}^\top \leftarrow \mathbf{g}^\top \odot \left[ \frac{ \mathbf{1}^\top  \left[\mid \mathbf{X} \mid^{\odot 2} \odot \sum\limits_{r=1}^{R} \left(\mathbf{V}_s^{(r)} \odot \left(\mathbf{V}_x^{(r)} \right)^{\odot-2}\right)\right]}{\mathbf{1}^\top \left[ \sum\limits_{r=1}^{R} \left(\mathbf{V}_s^{(r)} \odot \left(\mathbf{V}_x^{(r)} \right)^{\odot-1}\right)\right]} \right]^{\odot 1/2},
\label{update_g}
\end{equation}
where $\mathbf{1}$ is an all-ones column vector of dimension $F$ and $\mathbf{V}_s^{(r)} \in \mathbb{R}_+^{F \times N}$ is the matrix of entries $\left(\mathbf{V}_s^{(r)}\right)_{f,n} = \sigma_f^2\left(\mathbf{z}_n^{(r)}\right) $. The non-negativity of $\mathbf{H}_b$, $\mathbf{W}_b$ and $\mathbf{g}$ is ensured provided that those parameters are initialized with non-negative entries. In practice, at the M-step of the MCEM algorithm, we perform only one iteration of updates \eqref{updateH}, \eqref{updateW} and \eqref{update_g}.

\subsection{Speech reconstruction}
\label{subsec:speech_rec}

In the following $\bs{\theta}_u^\star = \{\mathbf{W}_b^\star, \mathbf{H}_b^\star, \mathbf{g}^\star\}$ denotes the set of parameters estimated by the above MCEM algorithm. For all $(f,n) \in \mathbb{B}$, let $\tilde{s}_{fn} = \sqrt{g_n^\star} s_{fn}$ be the scaled version of the speech STFT coefficients as introduced in \eqref{mixture}, with $g_n^\star = (\mathbf{g}^\star)_n$. Our final goal is to estimate those coefficients according to their posterior mean:
\begin{align}
\hat{\tilde{s}}_{fn} &= \mathbb{E}_{p(\tilde{s}_{fn} \mid x_{fn} ; \bs{\theta}_s, \bs{\theta}_u^\star)}  [\tilde{s}_{fn}] \nonumber \\
&= \mathbb{E}_{p(\mathbf{z}_n \mid \mathbf{x}_n ; \bs{\theta}_s, \bs{\theta}_u^\star) }\left[ \mathbb{E}_{p(\tilde{s}_{fn} \mid \mathbf{z}_n, \mathbf{x}_n ; \bs{\theta}_s, \bs{\theta}_u^\star)} [\tilde{s}_{fn}] \right]\nonumber \\
&= \mathbb{E}_{p(\mathbf{z}_n \mid \mathbf{x}_n ; \bs{\theta}_s, \bs{\theta}_u^\star) }\left[\frac{g_n^\star \sigma_f^2(\mathbf{z}_n)}{g_n^\star \sigma_f^2(\mathbf{z}_n) + (\mathbf{W}_b^\star\mathbf{H}_b^\star)_{f,n}}\right]x_{fn}.
\label{source_estimate}
\end{align}
This estimation corresponds to a soft-masking of the mixture signal, similarly as the standard Wiener filtering. Exactly as before, this expectation cannot be computed in an analytical form, but we can approximate it using the same Metropolis-Hastings algorithm as detailed in Section~\ref{subsec:EStep}. Note that this estimation procedure is different than the one proposed in \cite{bando2017statistical}. Here the STFT source coefficients are estimated from their true posterior distribution, without conditioning on the latent variables. The time-domain estimate of the speech signal is finally obtained by inverse STFT and overlap-add.

\section{Supervised Training of the Speech Model }
\label{sec:training}

For supervised training of the speech model, we use a large dataset of independent clean-speech STFT time frames $\mathbf{s}^{tr} = \{\mathbf{s}_n \in \mathbb{C}^F \}_{n=0}^{N_{tr}-1}$. In our speech enhancement method detailed above, we are only interested in using the generative model defined in \eqref{prior_VAE} and \eqref{decoder_VAE}. However, in order to learn the parameters $\bs{\theta}_s$ of this generative model, we also need to introduce a \emph{recognition network} or \emph{probabilistic encoder}  $q(\mathbf{z} \mid \mathbf{s}^{tr} ; \bs{\phi})$, which is an approximation of the true posterior $p(\mathbf{z} \mid \mathbf{s}^{tr} ; \bs{\theta}_s)$ \cite{kingma2013auto}. Independently for all $l \in \{0,...,L-1\}$ and $n \in \{0,...,N_{tr}-1\}$, $q(\mathbf{z} \mid \mathbf{s}^{tr} ; \bs{\phi})$ is defined similarly as in \cite{bando2017statistical} by:
\begin{equation}
z_{l,n} \mid \mathbf{s}_n ; \bs{\phi} \sim \mathcal{N}\left(\tilde{\mu}_l\left(|\mathbf{s}_n|^{\odot 2}\right), \tilde{\sigma}_l^2\left(|\mathbf{s}_n|^{\odot 2}\right) \right),
\label{encoder_VAE}
\end{equation}
where $z_{l,n} = (\mathbf{z}_n)_l$ and $\tilde{\mu}_l: \mathbb{R}_+^{F} \mapsto \mathbb{R}$, $\tilde{\sigma}_l^2: \mathbb{R}_+^{F} \mapsto \mathbb{R}_+$ are non-linear functions parametrized by $\bs{\phi}$. The output of each function is provided by one neuron in the $2L$-dimensional output layer of a neural network, whose weights and biases are denoted by $\bs{\phi}$.

As explained in \cite{kingma2013auto}, the recognition network parameters $\bs{\phi}$ and the generative network parameters $\bs{\theta}_s$ can be jointly trained by maximizing the evidence lower-bound (ELBO) defined by:
\begin{align}
\mathcal{L}\left(\bs{\theta}_s, \bs{\phi}\right) &=  \E_{q\left(\mathbf{z} \mid \mathbf{s}^{tr} ; \bs{\phi}\right)}\left[ \ln p\left(\mathbf{s}^{tr} \mid \mathbf{z} ; \bs{\theta}_s \right) \right] \nonumber \\
& \hspace{1.5cm} - D_{KL}\left(q\left(\mathbf{z} \mid \mathbf{s}^{tr} ; \bs{\phi}\right) \parallel p(\mathbf{z})\right),
\label{ELBO_1}
\end{align}
where $D_{KL}(q \parallel p) = \mathbb{E}_q[\ln(q/p)]$ is the Kullback-Leibler divergence. The first term in the right-hand side of \eqref{ELBO_1} is a reconstruction accuracy term, while the second one is a regularization term. From \eqref{prior_VAE}, \eqref{decoder_VAE} and \eqref{encoder_VAE}, the ELBO can be developed as follows:
\begin{align}
\mathcal{L}\left(\bs{\theta}_s, \bs{\phi}\right) \overset{c}{=}&- \sum_{f=0}^{F-1}\sum_{n=0}^{N_{tr}-1} \mathbb{E}_{q\left(\mathbf{z}_n \mid \mathbf{s}_n ; \bs{\phi} \right)}\left[ d_{IS}\left(\left|s_{fn}\right|^2 ; \sigma_f^2(\mathbf{z}_n)\right) \right] \nonumber \\
& \hspace{-1.6cm} + \frac{1}{2} \sum_{l=0}^{L-1} \sum_{n=0}^{N_{tr}-1}\left[ \ln \tilde{\sigma}_l^2\left(\left|\mathbf{s}_n\right|^{\odot 2}\right) - \tilde{\mu}_l\left(\left|\mathbf{s}_n\right|^{\odot 2}\right)^2 - \tilde{\sigma}_l^2\left(\left|\mathbf{s}_n\right|^{\odot 2}\right) \right],
\label{ELBO}
\end{align}
where $d_{IS}(x;y) = x/y - \ln(x/y) - 1$ is the Itakura-Saito (IS) divergence. We note that maximizing the ELBO with respect to the parameters $\bs{\theta}_s$ of the generative speech model amounts to minimizing a cost function that involves the IS divergence between the observed power spectrogram $|s_{fn}|^2$ and the variance model $\sigma_f^2(\mathbf{z}_n)$. It is interesting to highlight the similarity with the Gaussian generative speech model \eqref{local_gaussian_source_model}-\eqref{NMF_variance_model}, where maximum likelihood estimation of its parameters also amounts to minimizing the IS divergence between $|s_{fn}|^2$ and the NMF-based variance parametrization $(\mathbf{W}_s \mathbf{H}_s)_{f,n}$, as proved in \cite{ISNMF}.

Finally, as the expectation in \eqref{ELBO} cannot be computed in an analytical form, we again use a Monte Carlo approximation:
\medmuskip=1mu
\thinmuskip=1mu
\thickmuskip=2mu
\begin{equation}
\mathbb{E}_{q\left(\mathbf{z}_n \mid \mathbf{s}_n ; \bs{\phi} \right)}\left[ d_{IS}\left(\left|s_{fn}\right|^2 ; \sigma_f^2(\mathbf{z}_n)\right) \right] \approx \frac{1}{R} \sum_{r=1}^{R} d_{IS}\left(\left|s_{fn}\right|^2 ; \sigma_f^2\left(\mathbf{z}_n^{(r)}\right)\right),
\label{Monte_Carlo_estimate_VAE}
\end{equation}
\medmuskip=4mu plus 2mu minus 4mu
\thinmuskip=3mu
\thickmuskip=5mu plus 5mu
where for all $r \in \{1,...,R\}$, $\mathbf{z}_n^{(r)}$ is independently drawn from $q\left(\mathbf{z}_n \mid \mathbf{s}_n ; \bs{\phi} \right)$ in \eqref{encoder_VAE}, using the so-called \emph{reparametrization trick} \cite{kingma2013auto}. Independently for all $l \in \{0,...,L-1\}$, it consists in sampling:
\begin{equation}
z_{l,n}^{(r)} \mid \mathbf{s}_n ; \bs{\phi} \sim \tilde{\mu}_l\left(|\mathbf{s}_n|^{\odot 2}\right) + \tilde{\sigma}_l\left(|\mathbf{s}_n|^{\odot 2}\right) \mathcal{N}(0, 1).
\end{equation}
Injecting \eqref{Monte_Carlo_estimate_VAE} in \eqref{ELBO}, we obtain an objective function which is differentiable with respect to both $\bs{\theta}_s$ and $\bs{\phi}$, and which can be optimized using gradient-ascent-based algorithms.

\section{Experiments}
\label{sec:experiments}

\begin{figure}[t]
\centering
\includegraphics[width=.8\linewidth]{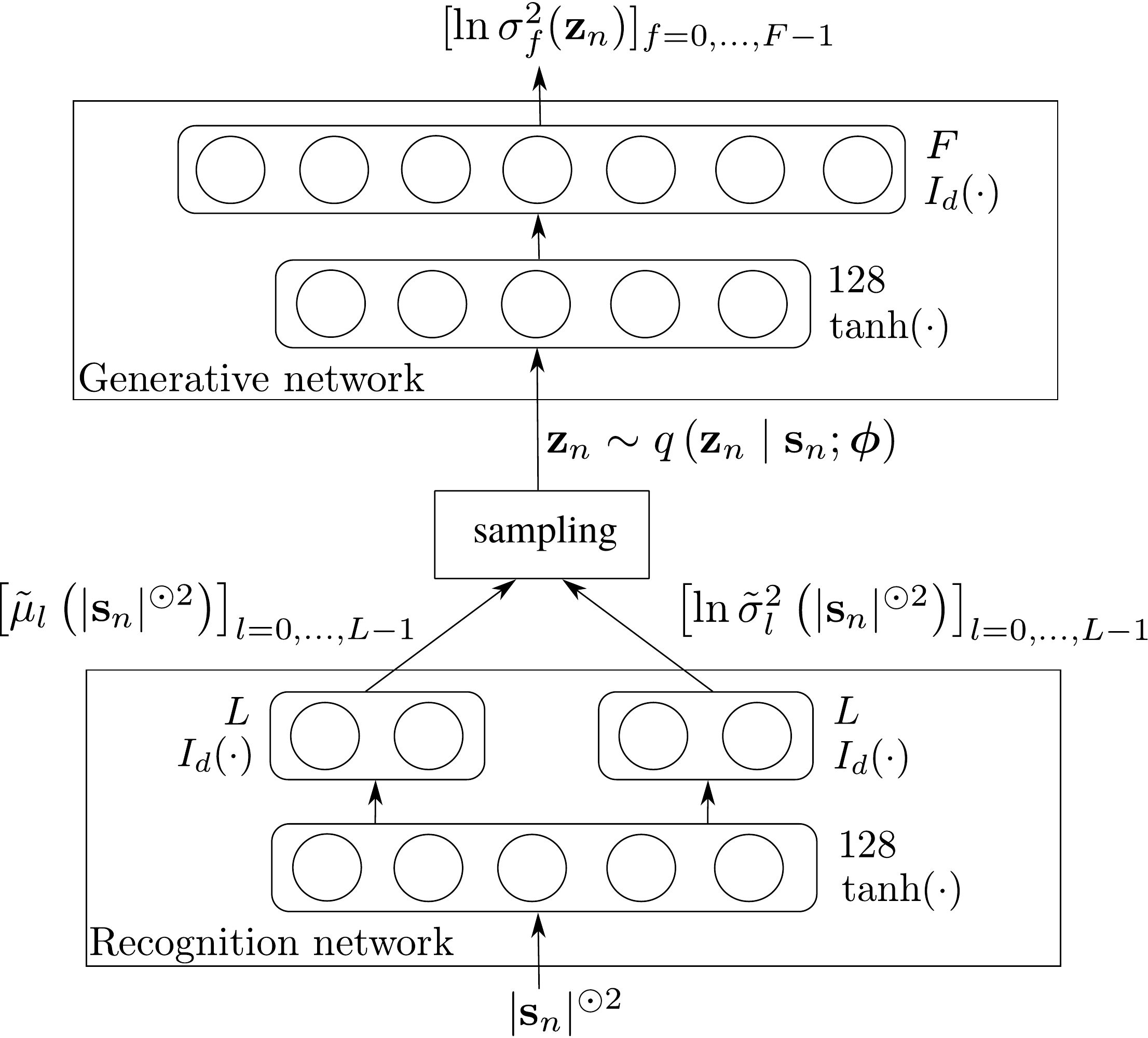}
\caption{Structure of our variational autoencoder. All layers are fully connected. Next to each layer we indicate its size along with the activation function that is used.
}%
\label{fig:network_architecture}%
\end{figure}

\textit{Database}: The supervised speech model parameters are learned from the training set of the TIMIT database \cite{TIMIT}. It contains almost 4 hours of speech signals at a 16-kHz sampling rate, distributed over 462 speakers. For the evaluation of the speech enhancement algorithm, we mixed clean speech signals from the test set of the TIMIT database and noise signals from the DEMAND database \cite{DEMAND_db}, with various noisy environments: domestic, nature, office, indoor public spaces, street and transportation. We created 168 mixtures at a 0~dB signal-to-noise ratio (one mixture per speaker in the TIMIT test set). Note that both speakers and sentences are different than in the training set.

\textit{Baseline methods}: Considering that both approaches belong to the variance modeling framework, we first compare our method with a semi-supervised NMF baseline as described in Section~\ref{sec:varianceModelingFramework}. The speech NMF dictionary is learned using the training set of the TIMIT database, and only the speech activation matrix and the noise NMF model parameters are estimated from the noisy mixture signals. The speech signal is then recovered by Wiener filtering. We also compare our approach with the fully supervised deep learning method proposed in \cite{xu2015regression}. In this work, a deep neural network is trained to map noisy speech log-power spectrograms to clean speech log-power spectrograms. The authors showed that their mapping-based approach outperformed other deep learning techniques that rely on the estimation of time-frequency masks. Moreover, the authors used more than 100 different noise types to train their system, which showed to be quite effective in handling unseen noise types. It is therefore a very relevant method to compare with. We used the Python code provided by the authors.\footnote{\url{https://github.com/yongxuUSTC/sednn}}

\textit{Parameter settings}: The STFT is computed using a 64-ms sine window (i.e.~$F=513$) with 75\%-overlap. We compare the performance of our method with the semi-supervised NMF baseline according to the dimension of the latent space $L$ and the rank $K_s$ of the NMF speech model. For fair comparison, we set $L = K_s \in \{8, 16, 32, 64, 128\}$. The rank of the noise model is arbitrarily fixed to $K_b = 10$. Unsupervised NMF parameters are randomly initialized. For the proposed method, the gain vector $\mathbf{g}$ is initialized with an all-ones vector. The iterative algorithms of those two methods are stopped if the improvement in terms of objective function is smaller that $10^{-4}$ (for our method we monitored $\tilde{Q}(\bs{\theta}_u; \bs{\theta}_u^\star)$). The parameters of our MCEM algorithm are the following ones: At each E-Step (see Section~\ref{subsec:EStep}), we run 40 iterations of the Metropolis-Hastings algorithm using $\epsilon^2 = 0.01$ for the proposal distribution, and we discard the first 30 samples as the burn-in period. At the first iteration of the MCEM algorithm we need to initialize the Markov chain of the Metropolis-Hastings algorithm. For that purpose we use the recognition network considering the mixture signal as input because the clean speech signal is not observed: for all $l \in \{0,...,L-1\}$, $z_{l,n}^{(0)} = \tilde{\mu}_l\left(|\mathbf{x}_n|^{\odot 2}\right)$. Then, at each new E-Step, we use the last sample drawn at the previous E-Step to initialize the Markov chain. For estimating the speech STFT coefficients (see Section~\ref{subsec:speech_rec}), we run the Metropolis-Hastings algorithm for 100 iterations and discard the first 75 samples. 

\textit{Variational autoencoder}: The structure of our variational autoencoder is represented in Fig.~\ref{fig:network_architecture}. As in \cite{kingma2013auto}, we use hyperbolic tangent ($\tanh(\cdot)$) activation functions except for the encoder/decoder output layers, which use identity activation functions ($I_d(\cdot)$). The values of these output layers thus lie in $\mathbb{R}$, which is the reason why we output logarithm of variances.  For learning the parameters $\bs{\theta}_s$ and $\bs{\phi}$ (see Section~\ref{sec:training}), we used the Adam optimizer \cite{kingma2014adam} with a step size of $10^{-3}$, exponential decay rates for the first and second moment estimates of $0.9$ and $0.999$ respectively, and an epsilon of $10^{-7}$ for preventing division by zero. 20\% of the TIMIT training set was kept as a validation set, and early stopping with a patience of 10 epochs was used. Weights were initialized using the uniform initializer described in \cite{glorot2010understanding}. For reproducibility, a Python implementation of our algorithm is available online.\footnote{\url{https://github.com/sleglaive/MLSP-2018}}

\textit{Results}: We evaluate the enhanced speech signal in terms of signal-to-distortion ratio (SDR) \cite{vincent2006performance} and perceptual evaluation of speech quality (PESQ) measure \cite{rix2001perceptual}. The SDR is expressed in decibels (dB) and the PESQ score lies between $-0.5$ and $4.5$. We computed the SDR using the mir\_eval Python library.\footnote{\url{https://github.com/craffel/mir_eval}} We first illustrate the interest of having a time-frame dependent gain $g_n$ as introduced in \eqref{mixture}. Inevitably, the variational autoencoder is trained using examples with a limited loudness range. If we force $g_n = 1$ for all $n\in \{0,...,N-1\}$, we can therefore expect the speech enhancement quality to depend on the speech power in the observed noisy mixture signal. This problem is illustrated in Fig.~\ref{fig:illustrate_gains}: The SDR indicated by black dots suddenly drops for a scaling factor of the mixture signal greater than 12~dB. It indicates that the power of the underlying clean speech signal in this mixture went out of the power range observed during the training. 
One solution to this problem could be to perform some kind of data augmentation as in \cite{bando2017statistical}. Here we exploited the statistical modeling framework, by introducing parameters in order to handle some inherent drawbacks of learning-based approaches. The gray diamonds in Fig.~\ref{fig:illustrate_gains} indeed show that the time-frame-dependent gains provide a solution to this robustness issue.

The speech enhancement results are presented in Fig.~\ref{fig:results}. We first observe that while the NMF baseline requires a sufficiently large rank $K_s=64$, our method obtains quite stable results for a latent dimension $L$ greater than or equal to $16$. Moreover it always performs better than the NMF baseline in terms of both SDR and PESQ measures. We also see that it outperforms the fully supervised deep learning approach \cite{xu2015regression}. It is indeed difficult for such kind of approaches to generalize to unseen noises, which strongly justifies the interest of semi-supervised approaches. Audio examples illustrating those results are available online.\footnote{\url{https://sleglaive.github.io/demo-mlsp18.html}}

We conclude this section with some information about the computational time for enhancing a 2.6 seconds-long mixture, using a central processing unit at 2.5 GHz. One iteration of the proposed MCEM algorithm (with $L=64$) takes around 2.9 seconds. For the semi-supervised NMF baseline (with $K_s=64$), one iteration of update rules takes around 4 milliseconds. The fully supervised deep-learning approach, which is non iterative, takes around 7 seconds.

\begin{figure}[t]
\centering
\includegraphics[width=.6\linewidth]{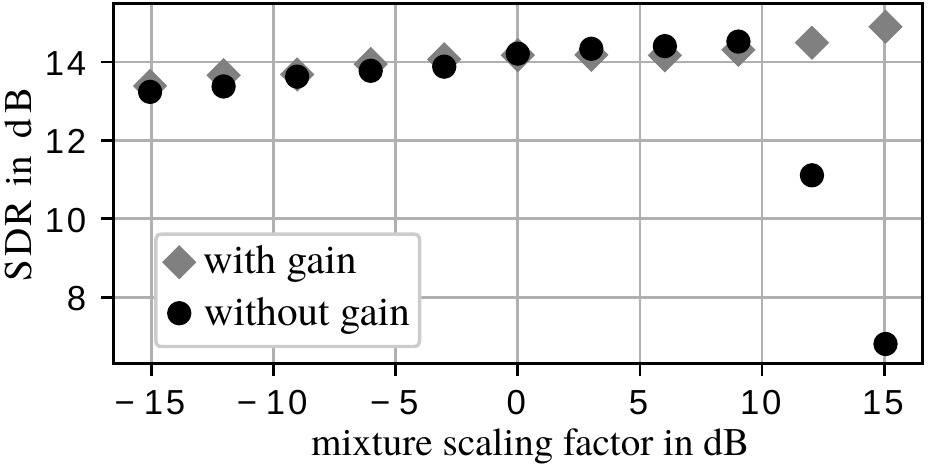}
\vspace{-.2cm}
\caption{SDR according to different scalings of the power of one mixture signal in our test dataset. Grey diamonds represent the performance when the vector of time-frame dependent gains $\mathbf{g}$ is updated at the M-Step, and black dots when $\mathbf{g}$ is fixed to an all-ones vector.}
\label{fig:illustrate_gains}
\end{figure}

\begin{figure}[t]
\centering
\includegraphics[width=.95\linewidth]{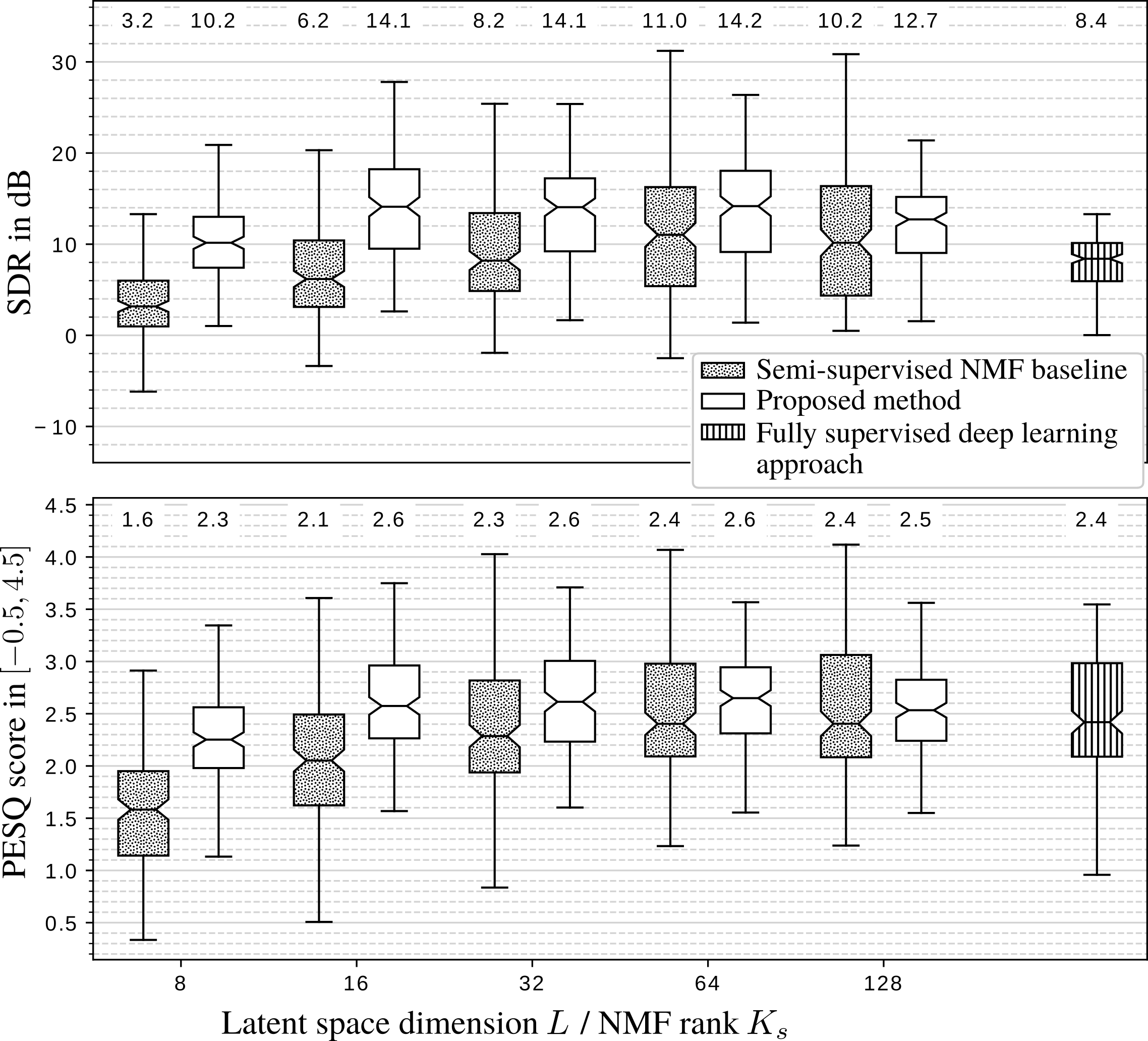}
\vspace{-.2cm}
\caption{Enhancement results, the median is indicated above each box.}
\label{fig:results}
\end{figure}

\section{Conclusion}
\label{sec:conclusion}

In this paper we showed that within the variance modeling framework, variational autoencoders are a suitable alternative to supervised NMF. We proposed a semi-supervised speech enhancement method based on an MCEM algorithm. Experimental results demonstrated that it outperforms both a semi-supervised NMF baseline and a fully supervised deep learning approach. We used a relatively shallow architecture for the variational autoencoder, we can therefore expect even better results by going deeper. 
Future works include using variational autoencoders for separating similar sources, e.g. multiple speakers. This is a much harder problem and we may have to exploit spatial information provided by multichannel mixtures to address it. Another interesting perspective would be to introduce a temporal model for the latent variables of the variational autoencoder.


\vspace{-.15cm}
\bibliographystyle{IEEEbib}
\bibliography{IEEEabrv,refs}

\end{document}